\def\IID{\mathop{\mathrm{i.i.d.}}}  % For IID
\def\diag{\mathop{\mathrm{diag}}}   % For diagonal matrices
\def\trace{\mathop{\mathrm{trace}}} % For trace
\def\op{\mathop{\mathrm{op}}}       % For operational
\def\rdf{\mathop{\mathrm{RDF}}}     % For RDF
\def\awgn{\mathop{\mathrm{AWGN}}}   % For AWGN
\def\na{\mathop{\mathrm{na}}}       % For NRDF
\def\mse{\mathop{\mathrm{MSE}}}     % For MSE
\def\mimo{\mathop{\mathrm{MIMO}}}     % For MIMO
\def\nrdf{\mathop{\mathrm{NRDF}}}   % For NRDF
\def\opta{\mathop{\mathrm{OPTA}}}   % For OPTA
\def\op{\mathop{\mathrm{op}}}       % For operational
\def\zd{\mathop{\mathrm{ZD}}}       % For zero-delay
\def\sd{\mathop{\mathrm{SD}}}       % For operational
\def\sdusq{\mathop{\mathrm{SDUSQ}}} % For SDUsQ
\def\ecdq{\mathop{\mathrm{ECDQ}}} % For ECDQ
\def\p1{\mathop{\mathrm{P1}}}       % For P1
\def\p2{\mathop{\mathrm{P2}}}       % For P2
\def\unif{\mathop{\mathrm{Unif}}}     % For Uniform
\def\rvs{\mathop{\mathrm{RVs}}}     % For RVs
\def\rv{\mathop{\mathrm{RV}}}       % For RV
\def\ar{\mathop{\mathrm{AR}}}       % For AR
\newtheorem{theorem}{Theorem}%[section]
\newtheorem{definition}{Definition}
\newtheorem{corollary}{Corollary}
\newtheorem{remark}{Remark}
\newtheorem{example}{Example}
\newcommand{\T}{^{\mbox{\tiny T}}}
\newcommand{\noi}{\noindent}
\newcommand{\be}{\begin{equation}}
\newcommand{\ee}{\end{equation}}
\newcommand{\bea}{\begin{eqnarray}}
\newcommand{\eea}{\end{eqnarray}}
\newcommand{\bes}{\begin{eqnarray*}}
\newcommand{\ees}{\end{eqnarray*}}
\newcommand{\bfi}{\begin{figure}}
\newcommand{\bfit}{\begin{figure}[t]}
\newcommand{\bfib}{\begin{figure}[b]}
\newcommand{\bfih}{\begin{figure}[h]}
\newcommand{\bfip}{\begin{figure}[p]}
\newcommand{\efi}{\end{figure}}
\newcommand{\bi}{\begin{itemize}}
\newcommand{\ei}{\end{itemize}}
\newcommand{\ben}{\begin{enumerate}}
\newcommand{\een}{\end{enumerate}}
\newcommand{\bp}{\begin{problem}}
\newcommand{\ep}{\end{problem}}
\begin{document}

\sloppy

%% Paper Title
%% You can use linebreaks \\ within to get better formatting as
%% desired. 
\title{An Upper Bound to Zero-Delay Rate Distortion via Kalman Filtering for Vector Gaussian Sources}

\author{
  \IEEEauthorblockN{Photios A. Stavrou$^*$, Jan $\O$stergaard$^*$, Charalambos D. Charalambous$^\dagger$, Milan Derpich$^\ddagger$}
  \IEEEauthorblockA{$^*$Department of Electronic Systems, Aalborg University, Denmark\\
  $^\dagger$Department of Electrical and Computer Engineering, University of Cyprus, Cyprus\\
  $^\ddagger$ Department of Electronic Engineering,  Universidad T\'ecnica Federico Santa Mar\'ia, Chile.\\
 {\it Emails:\{fos, jo\}@es.aau.dk, chadcha@uc.ac.cy, milan.derpich@usm.cl}}}

%% Create the title:
\maketitle

\begin{abstract}
We deal with zero-delay source coding of a vector Gaussian autoregressive ($\ar$) source subject to an average mean squared error ($\mse$) fidelity criterion. Toward this end, we consider the nonanticipative rate distortion function ($\nrdf$) which is a lower bound to the causal and zero-delay rate distortion function ($\rdf$). We use the realization scheme with feedback proposed in \cite{stavrou-charalambous-charalambous2016} to model the corresponding optimal ``test-channel'' of the $\nrdf$, when considering vector Gaussian $\ar (1)$ sources subject to an average $\mse$ distortion. We give conditions on the vector Gaussian $\ar (1)$ source to ensure asymptotic stationarity of the realization scheme (bounded performance). Then, we encode the vector innovations due to Kalman filtering via lattice quantization with subtractive dither and memoryless entropy coding. This coding scheme provides a tight upper bound to the zero-delay Gaussian $\rdf$. We extend this result to vector Gaussian $\ar$ sources of any finite order. Further, we show that for infinite dimensional vector Gaussian $\ar$ sources of any finite order, the $\nrdf$ coincides with the zero-delay $\rdf$. Our theoretical framework is corroborated with a simulation example.
\end{abstract}

%%%%%%%%%%%%%%%%%%%%%%%%%%%%%%%%%%%%%%%%%%%%%%%%%%%%%%%%%%%%%%%%%%%%%%
%
%
%Introduction
%
%
%
%%%%%%%%%%%%%%%%%%%%%%%%%%%%%%%%%%%%%%%%%%%%%%%%%%%%%%%%%%%%%%%%%%%%%%
\section{Introduction}\label{sec:introduction}

\par Zero-delay source coding is desirable in various real-time applications, such as, in signal processing \cite{huang-benesty2004} and networked control systems \cite{linder:2014,tanaka-johansson-oechtering-sandberg-skoglund2016,mohsen:2017}. Zero-delay codes form a subclass of causal source codes  (see \cite{neuhoff-gilbert1982}), namely, codes where the reproduced source samples depends on the source samples in a causal manner. However, zero-delay source coding compared to causal source coding allow the reproduction of each source sample at the same time instant that the source sample is encoded. Unfortunately, causal source coding does not exclude the possibility of long blocks of quantized samples, which may cost arbitrary end-to-end delays. %Of course, every zero-delay code needs to be causal, but the opposite is not true.
\par Zero-delay codes (and causal codes) in constrast to non-causal codes cannot achieve the classical rate distortion function ($\rdf$). Indeed, an open problem in information theory is quantifying the gap between the optimal performance theoretically attainable ($\opta$) by non-causal codes, and the $\opta$ by causal and zero-delay codes, hereinafter denoted by  $R_c^{\op}(D)$ and $R^{\op}_{\zd}(D)$, respectively. Notable exceptions where this gap is explicitly found are memoryless sources \cite{neuhoff-gilbert1982}, stationary sources in high rates \cite{linder-zamir2006}, and zero mean stationary scalar Gaussian sources with average mean squared error ($\mse$) distortion \cite{derpich-ostergaard2012}. 
\par Throughout the years, the interest in zero-delay applications is growing, thus, initiating further research on characterizing the fundamental limitations of the $\opta$ by zero-delay codes. Unfortunately, it turns out that $R^{\op}_{\zd}(D)$ is very hard to compute and for this reason there has been a turn in studying variants of classical $\rdf$ that perform as tight as possible to $R^{\op}_{\zd}(D)$. 
\par In this paper, we derive a tight upper bound to zero-delay source coding for vector Gaussian $\ar$ sources subject to an average $\mse$ distortion. We consider nonanticipative rate distortion function ($\nrdf$) (see, e.g., \cite{gorbunov-pinsker1972a,derpich-ostergaard2012,stavrou-charalambous-charalambous2016}), which gives a tighter lower bound to $R^{\op}_{\zd}(D)$ compared to the classical $\rdf$ (see, e.g., \cite[eq. (11)]{derpich-ostergaard2012}). Then, we employ the feedback realization scheme proposed in \cite[Fig. IV.3]{stavrou-charalambous-charalambous2016}, that corresponds to the optimal ''test-channel'' of $\nrdf$ for vector Gaussian $\ar (1)$ sources and average $\mse$ distortion. Further, we give conditions to asymptotically stabilize the performance of the specific scheme. By invoking standard techniques using entropy coded dithered quantizer ($\ecdq$) \cite{zamir-feder1992,zamir-feder1996} on the innovations' encoder of the feedback realization scheme, we derive the tight upper bound. In addition, we show how to generalize our scheme to vector Gaussian $\ar$ sources of any finite order. If the vector dimension of the Gauss $\ar$ source tends to infinity, we show that the $R^{\op}_{\zd}(D)$ coincides with the $\nrdf$. We demonstrate our results with a numerical example.
%\vspace*{-0.3cm}

%%%%%%%%%%%%%%%%%%%%%%%%%%%%%%%%%%%%%%%%%%%%%%%%%%%%%%%%%%%%%
%
%
%Notation
%
%
%%%%%%%%%%%%%%%%%%%%%%%%%%%%%%%%%%%%%%%%%%%%%%%%%%%%%%%% 
%\section{Notation}\label{notation}
%
\paragraph*{Notation} We let $\mathbb{R}=(-\infty,\infty)$, $\mathbb{N}_0=\{0,1,\ldots\}$. For $t\in\mathbb{N}_0$. We denote a sequence of $\rvs$ by ${\bf x}^n\triangleq({\bf x}_0,\ldots,{\bf x}_n)$ and its realization by ${\bf x}^n={x}^n, x_j\in{\cal X}_j,~j=0,\ldots,n$. The distribution of the $\rv$ ${\bf x}$ on ${\cal X}$ is denoted by ${\bf P}_{\bf x}(dx)\equiv{\bf P}(dx)$. The conditional distribution of ${\rv}$ ${\bf y}$ given ${\bf x}=x$ is denoted by ${\bf P}_{{\bf y}|{\bf x}}(dy|{\bf x}=x)\equiv{\bf P}(dy|x)$. The transpose of a matrix ${S}$ is denoted by ${S}^{\T}$. For a square matrix $S\in \mathbb{R}^{p\times p}$ with entries $S_{ij}$ on the $i^{th}$ row and $j^{th}$ column, we denote by $\diag\{S\}$ the matrix having $S_{ii},~i=1,\ldots,p$, on its diagonal and zero elsewhere. %We denote the time index by $``t"$ and the dimension index by $``i"$.

%%%%%%%%%%%%%%%%%%%%%%%%%%%%%%%%%%%%%%%%%%%%%%%%%%%%%%%%%%
%
%
%Problem Stement
%
%
%
%%%%%%%%%%%%%%%%%%%%%%%%%%%%%%%%%%%%%%%%%%%%%%%%%%%%%%%%%
\section{Problem Statement}\label{sec:problem_formulation}

\par In this paper we consider the zero-delay source coding setting illustrated in Fig.~\ref{fig:zero_delay_system}. In this setting, the $p-$dimensional (vector) Gaussian source is governed by the following discrete-time linear time-invariant state-space model
\begin{align}
{\bf x}_{t+1}=A{\bf x}_{t}+B{\bf w}_t,~t\in\mathbb{N}_0,\label{state_space_model}
\end{align}
\noi where $A\in\mathbb{R}^{p\times{p}}$, and $B\in\mathbb{R}^{p\times{q}}$ are known, ${\bf x}_0\in\mathbb{R}^p\sim{N}(0;\Sigma_{{\bf x}_0})$ is the initial state, and the noise process ${\bf w}_t\in\mathbb{R}^q$ is an $\IID$ Gaussian $N(0;I_{q\times{q}})$ sequence, independent of ${\bf x}_0$. We allow $A$ to have eigenvalues outside the unit circle which means that ${\bf x}_t$ can be unstable. 
 
\par The system operates as follows. At every time step $t$, the {\it encoder} observes the source ${\bf x}^t$ and produces a single binary codeword ${\bf z}_t$ from a predefined set of codewords ${\cal Z}_t$ of at most a countable number of codewords. Since the source is random, ${\bf z}_t$ and its length ${\bf l}_t$ are random variables. Upon receiving ${\bf z}_t$, the {\it decoder} produces an estimate ${\bf y}_t$ of the source sample. We assume that both the encoder and decoder process information without delay and they are allowed to have infinite memory of the past.
 
\begin{figure}[htp]
\centering
\includegraphics[width=90mm]{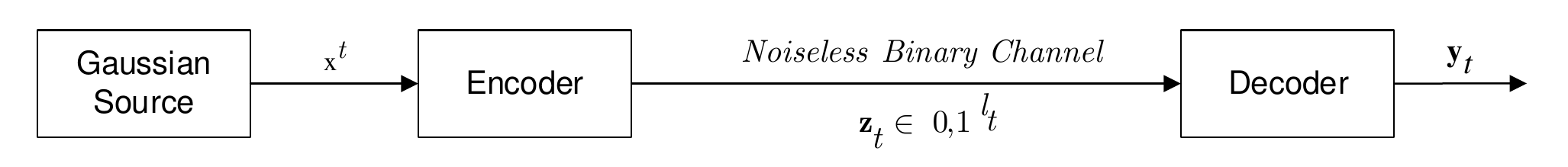}
\caption{A zero-delay source coding scenario.}\label{fig:zero_delay_system}
\end{figure}

\par The analysis of the noiseless digital channel is restricted to the class of instantaneous variable-length binary codes ${\bf z}_t$. The countable set of all codewords (codebook) ${\cal Z}_t$ is time-varying to allow the binary representation ${\bf z}_t$ to be an arbitrarily long sequence. The encoding and decoding policies are described by sequences of conditional probability distributions as $\{{\bf P}(dz_t|z^{t-1},x^t):~t\in\mathbb{N}_0\}$ and $\{{\bf P}(dy_t|y^{t-1},z^t):~t\in\mathbb{N}_0\}$, respectively. At $t=0$, we assume ${\bf P}(dz_0|z^{-1},x^0)={\bf P}(dz_0|x_0)$ and ${\bf P}(dy_0|y^{-1},z^0)={\bf P}(dy_0|z_0)$.
\par The design in Fig.~\ref{fig:zero_delay_system} is required to yield an asymptotic average distortion $\limsup_{n\longrightarrow\infty}\frac{1}{n+1}\mathbb{E}\{d({\bf x}^n,{\bf y}^n)\}\leq{D}$, where $D>0$ is the pre-specified distortion level, $d({\bf x}^n,{\bf y}^n)\triangleq\sum_{t=0}^n||{\bf x}_t-{\bf y}_t||_{2}^2$. The objective is to minimize the expected average codeword length denoted by $\limsup_{n\longrightarrow\infty}\frac{1}{n+1}\sum_{t=0}^n\mathbb{E}({\bf l}_t)$, over all encoding-decoding policies. These design requirements are formally cast by the following optimization problem:
\begin{align}
R^{\op}_{\zd}(D)\triangleq&\inf\limsup_{n\longrightarrow\infty}\frac{1}{n+1}\sum_{t=0}^n\mathbb{E}({\bf l}_t)\label{def:operational_zero_delay}\\
&\mbox{s. t.}~~\limsup_{n\longrightarrow\infty}\frac{1}{n+1}\mathbb{E}\{d({\bf x}^n,{\bf y}^n)\}\leq{D},\nonumber
\end{align}
\noindent i.e., the $\opta$ by zero-delay codes.  

%==================================================================
%
%Preliminaries
%
%
%===================================================================

\section{Preliminaries}\label{sec:preliminaries}

\par In this section, we give the definition of $\nrdf$ of vector Gaussian $\ar$ sources subject to an average $\mse$ distortion.  
\par Define the source distribution by ${\bf P}(dx^n)\triangleq\prod_{t=0}^n{\bf P}(dx_t|{x}^{t-1})$, the reconstruction distribution by ${\bf P}(dy^n||x^n)\triangleq\prod_{t=0}^n{\bf P}(dy_t|y^{t-1},x^t)$, and the joint distribution by ${\bf P}(d{x}^n,d{y}^n)\triangleq{\bf P}(dx^n)\otimes{\bf P}(dy^n||x^n)$. The marginal on ${y}_t\in{\cal Y}_t$, ${\bf P}(d{y}_t|{y}^{t-1})$, is induced by the joint distribution ${\bf P}(d{x}^n,d{y}^n)$. We assume that at $t=0$, ${\bf P}(dy_0|y^{-1},x^0)={\bf P}(dy_0|x_0)$.
 
\par Given the previous distributions, we introduce the mutual information between ${\bf x}^n$ and ${\bf y}^n$ as follows
\begin{align*}
I({\bf x}^n;{\bf y}^n)&\triangleq=\sum_{t=0}^n\mathbb{E}\log(\frac{{\bf P}({\bf y}_t|{\bf y}^{t-1},{\bf x}^t)}{{\bf P}({\bf y}_t|{\bf y}^{t-1})}),
\end{align*}
where $\mathbb{E}\{\cdot\}$ is the expectation with respect to the joint distribution ${\bf P}(d{x}^n,d{y}^n)$.
\begin{definition}($\nrdf$ with average $\mse$ distortion)\label{nonanticipative_rdf}{\ \\}
(1) The finite-time $\nrdf$ is defined by 
\begin{align}
{R}^{\na}_{0,n}(D) \triangleq & \inf_{\substack{{\bf P}(dy_t|y^{t-1},x^t):~t=0,\ldots,n\\~\frac{1}{n+1}\mathbb{E}\{d({\bf x}^n,{\bf y}^n)\}\leq{D}}}\frac{1}{n+1} I({\bf x}^n;{\bf y}^n),\label{finite_time_nrdf}
\end{align}
assuming the infimum exists.\\
(2) The per unit time asymptotic limit of \eqref{finite_time_nrdf} is defined by 
\begin{align}
{R}^{\na}&(D)=\lim_{n\longrightarrow\infty}{R}^{\na}_{0,n}(D),\label{infinite_time_nrdf}
\end{align}
assuming the infimum exists and the limit exists and it is finite.
\end{definition}
\par If one replaces $\liminf$ by $\inf\lim$ in \eqref{infinite_time_nrdf}, then an upper bound to $R^{\na}(D)$ is obtained, defined as follows.
\begin{align}
\bar{R}^{\na}(D)\triangleq&\inf_{{\bf P}^{\infty}(dy|z,x)}\lim_{n\longrightarrow\infty}\frac{1}{n+1}{R}^{\na}_{0,n}(D),\label{infinite_time_nrdf_upper_bound}
\end{align}
where ${\bf P}^{\infty}(dy|z,x)$ is the stationary or time-invariant reconstruction distribution.\\
It is shown in \cite[Theorem 6.6]{stavrou2016} that provided the limit in \eqref{infinite_time_nrdf_upper_bound} exists, and the source is stationary (or asymptotically stationary) then ${R}^{\na}(D)=\bar{R}^{\na}(D)$.
\par The optimization problem of Definition \ref{nonanticipative_rdf}, in contrast to the one given in \eqref{def:operational_zero_delay} is convex (see e.g., \cite{charalambous-stavrou2016}). In addition, for the source model \eqref{state_space_model} and the average $\mse$ distortion, then, by \cite[Theorems 1]{stavrou-charalambous-charalambous2016}, the optimal ``test channel'' corresponding to \eqref{infinite_time_nrdf} is of the form  
\begin{align}
{\bf P}^*(dy_t|y^{t-1},x^t)={\bf P}^*(dy_t|y^{t-1},x_t),~t\in\mathbb{N}_0,\label{optimal_minimizer}
\end{align}
where at $t=0$, ${\bf P}^*(dy_0|y_{-1},x_0)={\bf P}^*(dy_0|x_0)$, and the corresponding joint process $\{({\bf x}_t, {\bf y}_t):~t\in\mathbb{N}_0\}$  is jointly Gaussian.  

\section{Asymptotically Stationary Feedback Realization Scheme via Kalman Filtering}\label{subsec:realization_scheme}

\begin{figure*}[htp]
\centering
\includegraphics[width=0.9\textwidth]{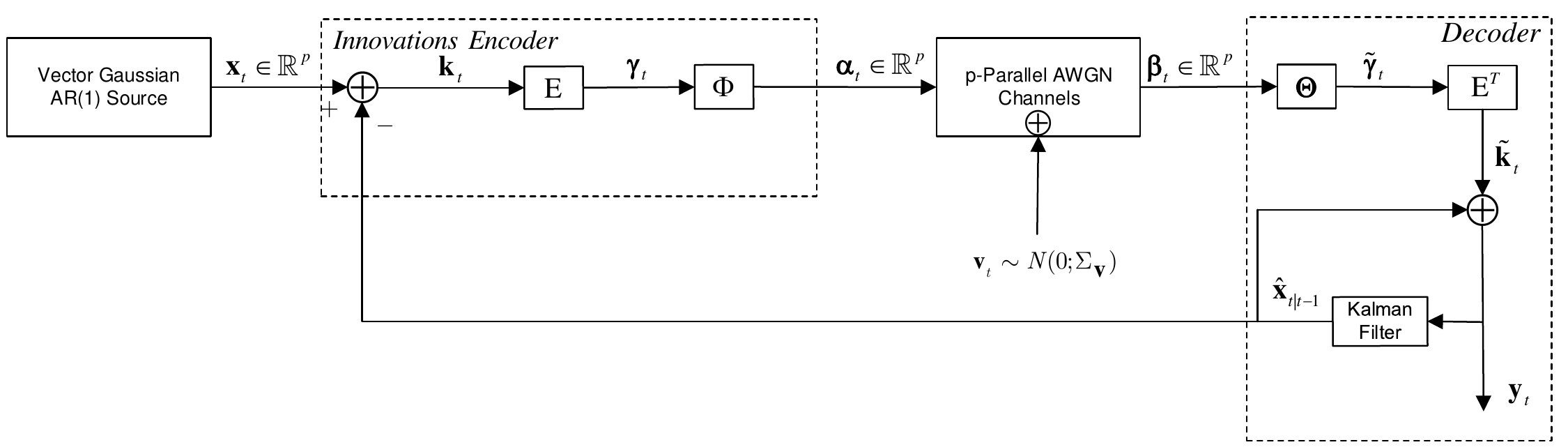}
\caption{Asymptotically stationary feedback realization scheme corresponding to \eqref{optimal_realization}.}
\label{fig:noisy_communication_system}
\end{figure*}
%\vspace*{-0.2cm}

\par The authors in \cite[Theorem 2]{stavrou-charalambous-charalambous2016} realized the optimal ``test channel'' of \eqref{optimal_minimizer} with the feedback realization scheme illustrated in \cite[Fig. IV.3]{stavrou-charalambous-charalambous2016} that corresponds to a realization of the form:
\begin{align}
{\bf y}_t=E_t^{\T}{H}_t{E}_t({\bf x}_t-\widehat{\bf x}_{t|t-1})+E_t^{\T}{\Theta}_t{\bf v}_t+\widehat{\bf x}_{t|t-1},\label{optimal_realization}
\end{align}
where ${H}_t\triangleq\Phi_t\Theta_t$ is a scaling matrix; ${\bf v}_t$ is an independent Gaussian noise process with $N(0;\Sigma_{{\bf v}_t})$,~$\Sigma_{{\bf v}_t}=\diag\{V_t\}$ independent of ${\bf x}_0$; the error ${\bf x}_t-\widehat{\bf x}_{t|t-1}$ is Gaussian with ${N}(0; \Pi_{t|t-1})$, and $\widehat{\bf x}_{t|t-1}\triangleq\mathbb{E}\{{\bf x}_t|{\bf y}^{t-1}\}$; the error ${\bf x}_t-\widehat{\bf x}_{t|t}$ is Gaussian with ${N}(0; \Pi_{t|t})$, and $\widehat{\bf x}_{t|t}\triangleq\mathbb{E}\{{\bf x}_t|{\bf y}^{t}\}$. Moreover, $\{\widehat{\bf x}_{t|t-1},~\Pi_{t|t-1}\}$ are given by the following Kalman filter equations:
\begin{subequations}\label{kalman_filter_finite_time}
\begin{align}
&\text{Prediction:}~\nonumber\\
&\widehat{\bf x}_{t|t-1}=A\widehat{\bf x}_{t-1|t-1},~~~\widehat{\bf x}_{0|-1}=\mathbb{E}\{{\bf x}_0\},\\
&\Pi_{t|t-1}={A}\Pi_{t-1|t-1}A^{\T}+BB^{\T},~~\Pi_{0|-1}=\Sigma_{{\bf x}_0},\label{a_priori_covariance}\\
&\text{Update:}~\nonumber\\
&\widehat{\bf x}_{t|t}=\widehat{\bf x}_{t|t-1}+G_t\tilde{\bf k}_t,~~~{\bf x}_{0|0}=\mathbb{E}\{{\bf x}_0\},\\
&\tilde{\bf k}_t\triangleq {\bf y}_t-\widehat{\bf x}_{t|t-1}~~\text{(innovation)},\label{innovations}\\
&\Pi_{t|t}=\Pi_{t|t-1}-G_tS_tG_t^{\T},~~\Pi_{0|0}=\Sigma_{{\bf x}_0},\label{a_posteriori_covariance}\\
&G_t=\Pi_{t|t-1}(E_t^{\T}H_tE_t)^{\T}S^{-1}_t,~~\text{(Kalman Gain)} \nonumber\\
&S_t=(E_t^{\T}H_tE_t)\Pi_{t|t-1}(E_t^{\T}H_tE_t)^{\T}+E_t^{\T}\Theta_t\Sigma_{{\bf v}_t}\Theta_t^{\T}E_t,\nonumber
\end{align}
\end{subequations}
where $H_t\triangleq\diag\left\{1-\frac{\tilde{\Delta}_{t}}{\Lambda_{t}}\right\}$,~${\Theta}_t\triangleq\sqrt{H_t\tilde{\Delta}_t\Sigma_{{\bf v}_t}^{-1}}$,~$\Phi_t\triangleq{\Theta}_t^{-1}H_t$, $\tilde{\Delta}_t\triangleq\diag\{\delta_t\}$,~$\Lambda_t\triangleq{E}_t\Pi_{t|t-1}E_t^{\T}=\diag\{\lambda_t\}$, and $E_t\in\mathbb{R}^{p\times{p}}$ is an orthogonal matrix. It is easy to verify following that the following hold:
\begin{align}
\widehat{\bf x}_{t|t-1}=A{\bf y}_{t-1},~\widehat{\bf x}_{t|t}={\bf y}_{t},~G_t=I,~\Pi_{t|t}=E_t^{\T}\tilde{\Delta}_tE_t,\label{new_estimates}
\end{align}
where $I\in\mathbb{R}^{p\times{p}}$ denotes the identity matrix. By substituting \eqref{new_estimates} in \eqref{optimal_realization} we can also deduce that ${\bf P}^*(dy_t|y^{t-1},x_t)={\bf P}^*(dy_t|y_{t-1},x_t)$. 
\par The realization scheme of \eqref{optimal_realization} becomes asymptotically stationary (stable) if one of the following two conditions hold:
(1) $A$ is stable, i.e., its eigenvalues have magnitude less than one; (2) the pair ($A,B$) is (completely) stabilizable (see e.g., \cite[p. 342]{anderson-moore:1979}). This means that $\Pi\triangleq\lim_{t\longrightarrow\infty}\Pi_{t|t-1}<\infty$, $\Pi^\prime\triangleq\lim_{t\longrightarrow\infty}\Pi_{t|t}<\infty$, $E_t\equiv{E}$, and $\Sigma_{{\bf v}_t}\equiv{\Sigma}_{\bf v}$. 
\par Next, we briefly discuss the resulting asymptotically stationary realization scheme depicted in Fig. \ref{fig:noisy_communication_system}.

\paragraph*{\it Preprocessing at Encoder} Introduce the estimation error $\{{\bf k}_t\in\mathbb{R}^p:~{t\in\mathbb{N}_0}\}$, where ${\bf k}_t\triangleq{\bf x}_t-\widehat{\bf x}_{t|t-1},~t\in\mathbb{N}_0$ with (error) covariance $\Pi_{t|t-1}$,~$t\in\mathbb{N}_0$. Under conditions (1) or (2), we ensure that $\Pi\triangleq\lim_{t\longrightarrow\infty}\Pi_{t|t-1}$ and it is unique. The error covariance matrix $\Pi$ is diagonalized by introducing an orthogonal matrix $E$ (invertible matrix) such that $E\Pi{E}^{\T}=\diag\{\lambda\}\triangleq{\Lambda}$. To facilitate the computation, we introduce the scaling process $\{{\bm \gamma}_t\in\mathbb{R}^p:~{t\in\mathbb{N}_0^{n}}\}$, where ${\bm \gamma}_t\triangleq{E}{\bf k}_t,~{t\in\mathbb{N}_0}$, has independent Gaussian components.
\paragraph*{\it Preprocessing at Decoder} Analogously, we introduce the innovations process $\{{\bf \tilde{k}}_t:~{t\in\mathbb{N}_0}\}$ defined by \eqref{innovations} and the scaling process  $\{{ \tilde{\bm \gamma}}_t:~{t\in\mathbb{N}_0}\}$ defined by ${\bm \tilde{\gamma}}_t{\triangleq}\Theta{\bm \beta}_t$,~with ${\bm \beta}_t\triangleq\left(\Phi{E}{\bf k}_t+{\bf v}_t\right),~~{\bf v}_t\sim{N}(0;\Sigma_{\bf v})$, and $\{{\Phi},~{\Theta}\}$ are the asymptotic limits of $\Phi_t$ and $\Theta_t$, respectively.

\noindent The fidelity criterion $||{\bf k}_t-{\bf \tilde{k}}_t||_{2}^2$ at each $t$ is not affected by the above processing of $\{({\bf x}_t,{\bf y}_t):~{t\in\mathbb{N}_0}\}$, in the sense that the preprocessing at both the encoder and decoder do not affect the form of the squared error distortion function, that is,
\begin{align}
||{\bf x}_t-{\bf y}_t||_{2}^2=||{\bf k}_t-{\bf \tilde{k}}_t||_{2}^2,~~t\in\mathbb{N}_0.\label{feedback_squeme:eq.3}
\end{align}
\noi Moreover, using basic properties of conditional entropy (see, e.g., \cite[Eq. (IV.35)]{charalambous-stavrou-ahmed2014}), it can be shown that
\begin{align}
&{R}^{\na}(D)=\lim\frac{1}{n+1}\sum_{t=0}^n{I}({\bf x}_t;{\bf y}_t|{\bf y}^{t-1})\label{feedback_squeme:eq.4}\\
&\mbox{ s.t.}~~\lim_{n\longrightarrow\infty}\frac{1}{n+1}\mathbb{E}\left\{d({\bf x}^n,{\bf y}^n)\right\}\leq{D},\nonumber%\label{feedback_squeme:eq.5}
\end{align}
and
\begin{align}
&R^{\na,{\bf k}^n,{\bf \tilde{k}}^n}(D)\triangleq\lim\frac{1}{n+1}\sum_{t=0}^n{I}({\bf k}_t;{\bf \tilde{k}}_t)\label{feedback_squeme:eq.6}\\
&\mbox{ s.t.}~~\lim_{n\longrightarrow\infty}\frac{1}{n+1}\mathbb{E}\left\{d({\bf k}^n,{\bf \tilde{k}}^n)\right\}\leq{D},\nonumber%\label{feedback_squeme:eq.7}
\end{align}
are equivalent expressions.\\
In addition, the steady state values of \eqref{a_priori_covariance} is $\Pi={A}\Pi^\prime{A}^{\T}+BB^{\T}$. The end-to-end $\mse$ distortion of the scheme in Fig. \ref{fig:noisy_communication_system} is
\begin{align}
&\lim_{t\longrightarrow\infty}\mathbb{E}\{({\bf x}_t-{\bf y}_t)^{\T}({\bf x}_t-{\bf y}_t)\}\nonumber\\
&=\lim_{t\longrightarrow\infty}\trace\mathbb{E}\{({\bf x}_t-\widehat{\bf x}_{t|t})({\bf x}_t-\widehat{\bf x}_{t|t})^{\T}\}=\trace(\Pi^{\prime})\leq{D}.\nonumber
\end{align}
\noi Hence, following \cite[Theorem 2]{stavrou-charalambous-charalambous2016}, the per unit time asymptotic limit of Gaussian $\nrdf$ subject to the total $\mse$ distortion can be expressed as follows.
\begin{align}
{R}^{\na}(D)&%\min_{\trace(\Pi^{\prime})\leq{D}}\frac{1}{2}\log\frac{|\Pi|}{|\Pi^{\prime}|}
=\min_{\trace(\Pi^{\prime})\leq{D}}\frac{1}{2}\log\max\left(1,\frac{|{A}\Pi^\prime{A}^{\T}+BB^{\T}|}{|\Pi^{\prime}|}\right).\label{optimal_solution:eq.1} 
\end{align}
\par Clearly, the optimization problem in \eqref{optimal_solution:eq.1}, is a log-determinant minimization problem and can be solved using, for instance, Karush-Kuhn-Tucker conditions \cite[Chapter 5.5.3]{boyd:2004} or semidefinite programming (SDP). A way of solving \eqref{optimal_solution:eq.1} is proposed in \cite{tanaka:2017}. However, compared to that work, our realization scheme is implemented with feedback to take into account the effect of unstable sources in the dynamical system
%
%=================================================================================
%
%
% Upper Bound
%
%
% =================================================================================
\section{Upper Bound to zero-delay Gaussian $\rdf$}
\label{sec:main_results}

\par In this section, we derive an upper bound to the zero-delay Gaussian $\rdf$ using a subtractively dithered uniform scalar quantizer ($\sdusq$) on the feedback realization scheme of Fig.~\ref{fig:noisy_communication_system}. The $\sdusq$ scheme was introduced in \cite{zamir-feder1992} and since then it has been used in several papers (see, e.g., \cite{silva-derpich-ostergaard2011,derpich-ostergaard2012,tanaka-johansson-oechtering-sandberg-skoglund2016}) under various realization setups. However, it has never been documented for the realization scheme proposed in this work. Here, we consider the vector Gaussian $\ar (1)$ source of \eqref{state_space_model}, and we quantize each time step $t$ over $p$ independently operating $\sdusq$, with their outputs being jointly entropy coded conditioned to the dither. We extend our results when using vector quantization showing that at infinite dimensional vectors, the space-filling loss due to compression and the entropy coding extinguishes, i.e., $R^{\na}(D)$ and $R^{\op}_{\zd}(D)$ coincide.

%\vspace*{-0.1cm}
  
\subsection{Scalar quantization}\label{subsection:componentwise_quantization}  

\par Next, we use the asymptotically stationary feedback realization scheme illustrated in Fig. \ref{fig:noisy_communication_system} to design an efficient \texttt{\{encoder/quantizer,decoder\}} pair. %This procedure is described next.
\par We select the quantizer step size $\Delta$ so that the covariance of the resulting quantization error meets $\Sigma_{\bf v}$. The encoder does not quantize the observed state ${\bf x}_t$ directly. Instead, it quantizes the deviation of ${\bf x}_t$ from the linear estimate $\widehat{\bf x}_{t|t-1}$ of ${\bf x}_t$. This method is known in least squares estimation theory as {\it innovations approach} and, therefore, the encoder is named as an {\it innovations' encoder}. We consider the zero-delay source coding setup illustrated in Fig.~\ref{fig:noisy_communication_system} with the additional change of the $p-$parallel additive white Gaussian noise ($\awgn$) channels with $p$ independently operating $\sdusq$. This is illustrated in Fig.~\ref{fig:replacement}. Note that, all matrices and scalings adopted in Fig. \ref{fig:noisy_communication_system} still hold when the aforementioned replacement is applied.

\begin{figure}
    \centering
    \begin{subfigure}[b]{0.52\columnwidth}%{width=0.7\textwidth}
        \includegraphics[width=\textwidth]{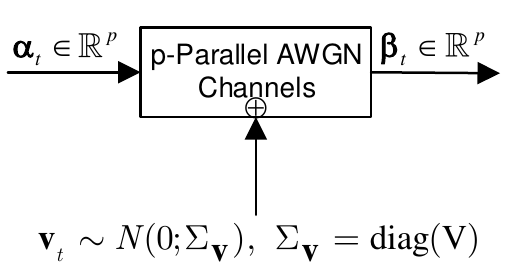}         
        \caption{$p-$parallel $\awgn$ channels.}
        \label{fig:awgn}
    \end{subfigure}
    ~ %add desired spacing between images, e. g. ~, \quad, \qquad, \hfill etc. 
      %(or a blank line to force the subfigure onto a new line)
    \begin{subfigure}[b]{0.95\columnwidth}%{width=.7\textwidth}
        \includegraphics[width=\textwidth]{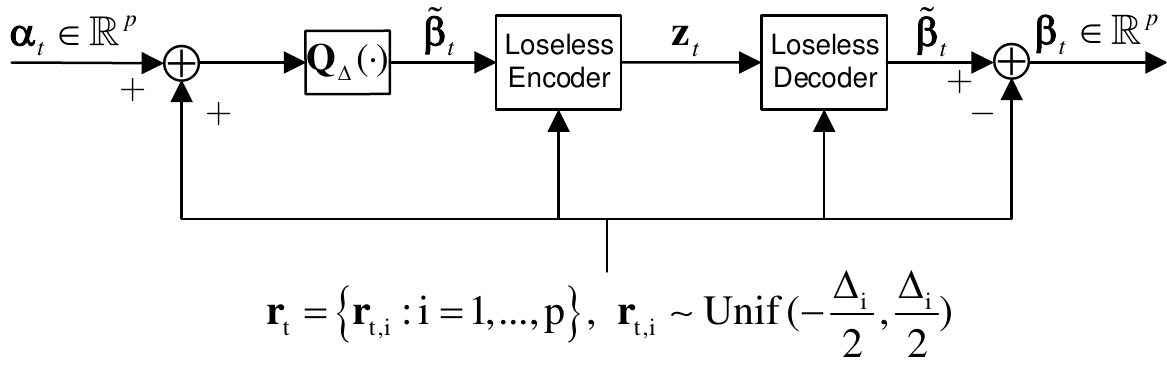}
        \caption{Realization over $p$ independently operating $\sdusq$.}
        \label{fig:sdusq}
    \end{subfigure}
    ~ %add desired spacing between images, e. g. ~, \quad, \qquad, \hfill etc. 
    %(or a blank line to force the subfigure onto a new line)
    \caption{Scalar quantization by replacing a $p-$dimensional $\awgn$ channel with $p$ independently operating $\sdusq$.\vspace*{-0.3cm}}\label{fig:replacement}
\end{figure}
%\vspace*{-0.1cm}  

\par For each time step $t$, the input to the quantizer, is a scaled estimation error defined as follows
\begin{align}
{\bm{\alpha}}_t={\cal A}{\bf k}_t,~{\cal A}=\Phi{E}.\label{input_quantizer}
\end{align} 
Moreover, ${\bm \alpha}_t$ is an $\mathbb{R}^p-$valued random process. The parallel $p-$dimensional $\awgn$ channel is replaced by $p$ independently operating $\sdusq$, hence we can design the covariance matrix $\Sigma_{\bf v}$ of the $\awgn$ channels corresponding to the $p$-parallel $\awgn$ channels in such a way, that for each $t$, each diagonal entry $V_{ii}, i=1,\ldots,p$, i.e., $\Sigma_{\bf v}\triangleq\diag\{V\}$, to correspond to a quantization step size $\Delta_i, i=1,\ldots,p$, such that 
\begin{align}
{V}_{ii}=\frac{\Delta^2_i}{12},~i=1,\ldots,p.\label{quantization_step}
\end{align}
This results creates a multi-input multi-output ($\mimo$) transmission of parallel and independent $\sdusq$. We apply $\sdusq$ to each component of ${\bm \alpha}_t$, i.e.,
\begin{align}
{\bm \beta}_{t,i}=Q^{\sd}_{\Delta_i}({\bm \alpha}_{t,i}),~i=1,\ldots,p\label{output_quantizer1}
\end{align}
 and we let ${\bf r}_t$ be the $\mathbb{R}^p-$valued random process of dither signals whose individual components  $\{{\bf r}_{t,1},\ldots,{\bf r}_{t,p}\}$ are mutually independent and uniformly distributed random variables ${\bf r}_{t,i}\sim{\unif}\left(-\frac{\Delta_i}{2},\frac{\Delta_i}{2}\right)$ independent of the corresponding source input components ${\bm{\alpha}}_{t,i},~\forall{t,i}$. The output of the quantizer is given by
\begin{align}
\tilde{\bm \beta}_{t,i}=Q_{\Delta_i}({\bm \alpha}_{t,i}+{\bf r}_{t,i}),~i=1,\ldots,p.\label{output_quantizer2} 
\end{align}
Note that $\tilde{\bm \beta}_{t}=(\tilde{\bm \beta}_{t,1},\ldots,\tilde{\bm \beta}_{t,p})$ can take a countable number of possible values. In addition, by construction (see Fig.~\ref{fig:noisy_communication_system}), %the sequence of random vectors $\{{\bf k}_t:~t=0,1,\ldots\}$ is white, 
the sequences $\{{\bm \alpha}_t:~t=0,1,\ldots\}$ and $\{\tilde{\bm \beta}_{t}:~t=0,1,\ldots\}$ are not Gaussian any more since by applying the change illustrated in Fig.~\ref{fig:replacement}, $\{{\bm \alpha}_t:~t=0,1,\ldots\}$ and $\{\tilde{\bm \beta}_{t}:~t=0,1,\ldots\}$ contain samples of the uniformly distributed process $\{{\bm r}_t:~t=0,1,\ldots\}$. As a result, the Kalman filter in Fig.~\ref{fig:noisy_communication_system} is no longer the least mean square estimator.% since the obtained quantized signals may no longer be Gaussian.

\paragraph*{Entropy coding} In what follows, we apply joint entropy coding across the vector dimension $p$ and memoryless coding across the time, that is, at each time step $t$ the output of the quantizer $\tilde{\bm \beta}_{t}$ is conditioned to the dither to generate a codeword ${\bf z}_t$. The decoder reproduces  ${\bm \beta}_{t}$ by subtracting the dithered signal ${\bf r}_t$ from $\tilde{\bm \beta}_{t}$. Specifically, at every time step $t$, we require that a message $\tilde{\bm \beta}_t$ is mapped into a codeword ${\bf z}_t\in\{0,1\}^{{\bf l}_t}$ designed using Shannon codes \cite[Chapter 5.4]{cover-thomas2006}. %Recall that Shannon coding is a suboptimal coding scheme that provides bounds to the optimal expected code length $\mathbb{E}({\bf l})$. 
For a $\rv$ ${\bf x}$, the codes constructed based on Shannon coding scheme give an instantaneous (prefix-free) code with expected code length that satisfies the bounds 
\begin{align}
H({\bf x})\leq\mathbb{E}({\bf l})\leq{H}({\bf x})+1.\label{intantaneous_code_bounds}
\end{align}
If ${\bf x}$ is a $p-$dimensional random vector then the normalized version of \eqref{intantaneous_code_bounds} gives
\begin{align}
\frac{{H}({\bf x})}{p}\leq\frac{\mathbb{E}({\bf l})}{p}\leq\frac{{H}({\bf x})}{p}+\frac{1}{p}.\label{intantaneous_code_bounds:eq.1}
\end{align}

\begin{figure}[htp]
\centering
\includegraphics[width=0.7\columnwidth]{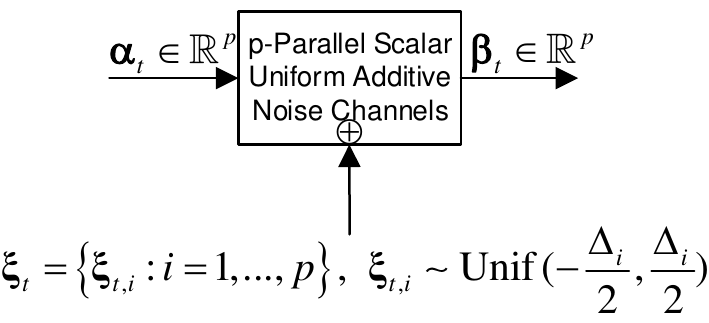}
\caption{An equivalent model to Fig.~\ref{fig:sdusq} based on scalar uniform additive noise channel\vspace{-0.3cm}.}
\label{fig:equivalent_additive_uniform_noise_channel}
\end{figure}

\par Since the $\sdusq$ operates using memoryless entropy coding over time, the following theorem holds.

\begin{theorem}(Upper bound)\label{theorem:general_bound}{\ \\}
Consider the realization of the zero-delay source-coding scheme illustrated in Fig.~\ref{fig:noisy_communication_system} with the change of $\awgn$ channel with $p-$parallel independently operating $\sdusq$ illustrated in Fig. \ref{fig:replacement}. If the vector process $\{\tilde{\bm  \beta}_t:~t=0,1,\ldots\}$ of the quantized output is jointly entropy coded conditioned to the dither signal values in a memoryless fashion for each $t$, then the operational Gaussian zero-delay rate, $R^{\op}_{\zd}(D)$, satisfies 
\begin{align}
R^{\op}_{\zd}(D)\leq{R}^{\na}(D)+\frac{p}{2}\log_2\left(\frac{\pi{e}}{6}\right)+1\label{operational_upper_bound:1}
\end{align}
where $p$ is the dimension of the state-space representation given in \eqref{state_space_model}, while the average $\mse$ distortion achieves the end-to-end average distortion $D$ of the system.
\end{theorem}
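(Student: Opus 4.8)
\emph{Proof strategy.} The plan is to exhibit the $\sdusq$-based scheme of Fig.~\ref{fig:replacement} as an explicit feasible zero-delay code for \eqref{def:operational_zero_delay} and to upper bound the resulting average codeword length. Two ingredients drive the argument: (i) the exact stochastic equivalence between a subtractively-dithered uniform scalar quantizer and an additive uniform-noise channel (Schuchman's condition), which leaves all second-order statistics of the closed loop untouched; and (ii) the entropy-coded dithered quantization (ECDQ) rate identity of \cite{zamir-feder1992,zamir-feder1996} combined with the Gaussian maximum-entropy inequality.

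\emph{Feasibility (structural invariance).} Choosing the quantizer steps $\Delta_i=\sqrt{12\,V_{ii}}$ as in \eqref{quantization_step}, the subtractive-dither error ${\bf n}_{t,i}\triangleq{\bm \beta}_{t,i}-{\bm \alpha}_{t,i}$ is uniform on $(-\Delta_i/2,\Delta_i/2)$, independent of ${\bm \alpha}_{t,i}$, and has variance $\Delta_i^2/12=V_{ii}$; hence the equivalent model ${\bm \beta}_t={\bm \alpha}_t+{\bf n}_t$ of Fig.~\ref{fig:equivalent_additive_uniform_noise_channel} carries the same noise covariance $\Sigma_{\bf v}=\diag\{V\}$ as the Gaussian ``test channel''. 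Because every block of Fig.~\ref{fig:noisy_communication_system} is linear and the closed-loop filter keeps the same linear coefficients (although it is no longer the MMSE estimator, ${\bm \alpha}_t$ and $\tilde{\bm \beta}_t$ being non-Gaussian), the covariances $\Pi_{t|t-1},\Pi_{t|t}$, the diagonalizing matrix $E$, the scalings $\Phi,\Theta$, and $\var({\bm \alpha}_{t,i})=\Phi_{ii}^2\lambda_{t,i}$ are exactly those of the Gaussian realization of Section~\ref{subsec:realization_scheme}. In particular, by \eqref{feedback_squeme:eq.3} and the steady-state identity $\lim_t\mathbb{E}\{||{\bf x}_t-{\bf y}_t||_{2}^2\}=\trace(\Pi^{\prime})\le D$, the scheme meets the average $\mse$ constraint, so it is a legitimate competitor in \eqref{def:operational_zero_delay}.

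\emph{Rate bound.} At each $t$, $\tilde{\bm \beta}_t$ is jointly entropy coded conditioned on the dither ${\bf r}_t$ (common randomness at both ends), so by \eqref{intantaneous_code_bounds}, $\mathbb{E}({\bf l}_t)\le H(\tilde{\bm \beta}_t\mid{\bf r}_t)+1$. The ECDQ identity applied to the $p$ independent scalar quantizers gives $H(\tilde{\bm \beta}_t\mid{\bf r}_t)=I({\bm \alpha}_t;{\bm \beta}_t)=h({\bm \beta}_t)-\sum_{i=1}^p h({\bf n}_{t,i})$ with $h({\bf n}_{t,i})=\log_2\Delta_i$. Bounding $h({\bm \beta}_t)\le\sum_i h({\bm \beta}_{t,i})\le\sum_i\tfrac12\log_2\!\big(2\pi e(\var({\bm \alpha}_{t,i})+V_{ii})\big)$ (which uses only the variances, not Gaussianity) and substituting $\Delta_i^2=12\,V_{ii}$, so that $2\pi e/12=\pi e/6$, yields
\begin{align*}
H(\tilde{\bm \beta}_t\mid{\bf r}_t)\le\frac{p}{2}\log_2\!\Big(\frac{\pi e}{6}\Big)+\sum_{i=1}^p\frac{1}{2}\log_2\!\Big(1+\frac{\var({\bm \alpha}_{t,i})}{V_{ii}}\Big).
\end{align*}
The last sum is precisely the per-step mutual information of the Gaussian $p$-parallel $\awgn$ realization, $I({\bm \alpha}_t;{\bm \alpha}_t+{\bf v}_t)=I({\bf k}_t;\tilde{\bf k}_t)=I({\bf x}_t;{\bf y}_t\mid{\bf y}^{t-1})$ (the channels being parallel with independent components ${\bm \alpha}_{t,i}$); its time-averaged limit equals $R^{\na}(D)$ by \eqref{feedback_squeme:eq.4}--\eqref{feedback_squeme:eq.6} and \eqref{optimal_solution:eq.1}, since in steady state $1+\var({\bm \alpha}_{i})/V_{ii}=\lambda_i/\delta_i$ and the sum collapses to $\tfrac12\log_2(|\Pi|/|\Pi^{\prime}|)$. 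Averaging $\mathbb{E}({\bf l}_t)\le H(\tilde{\bm \beta}_t\mid{\bf r}_t)+1$ over $t=0,\dots,n$, dividing by $n+1$ and letting $n\to\infty$ gives \eqref{operational_upper_bound:1}.

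\emph{Main obstacle.} The step I expect to require the most care is the structural invariance in the first part: rigorously verifying that inserting the $\sdusq$ into the feedback loop does not perturb the realization, i.e., that the closed-loop system driven by the uniform quantization errors has exactly the same second-order statistics as the one driven by the Gaussian innovations noise, so that the matrices $\{E,\Phi,\Theta,\Lambda,\Sigma_{\bf v}\}$ and the Kalman recursions \eqref{kalman_filter_finite_time} transfer verbatim and the end-to-end $\mse$ is literally unchanged. This is exactly where subtractive dithering is indispensable --- it makes the quantization error independent of the input with the prescribed variance, which keeps the additive split ${\bm \beta}_t={\bm \alpha}_t+{\bf n}_t$ valid and the per-component variances pinned. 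The remaining information-theoretic steps (the ECDQ rate identity, the $\tfrac{p}{2}\log_2(\pi e/6)$ space-filling penalty, and the identification of the time-averaged limit with $R^{\na}(D)$ under the asymptotic stationarity of Section~\ref{subsec:realization_scheme}) then follow the classical line of \cite{zamir-feder1992,derpich-ostergaard2012}.
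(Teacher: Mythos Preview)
Your proposal is correct and follows essentially the same route as the paper: the ECDQ identity $H(\tilde{\bm\beta}_t\mid{\bf r}_t)=I({\bm\alpha}_t;{\bm\beta}_t)$, a Gaussian maximum-entropy step that produces the $\tfrac{p}{2}\log_2(\pi e/6)$ space-filling term, the $+1$ Shannon-code overhead, and the identification of the time-averaged Gaussian mutual information with $R^{\na}(D)$ via asymptotic stationarity. The only cosmetic difference is that you bound $h({\bm\beta}_t)$ componentwise (subadditivity plus scalar max-entropy), whereas the paper works at the vector level using the relative-entropy identity $\mathbb{D}(x\|x^G)=h(x^G)-h(x)$ and drops $\mathbb{D}({\bm\alpha}_t+{\bm\xi}_t\|{\bm\alpha}_t^G+{\bf v}_t)\ge 0$; since all relevant covariances are diagonal, the two computations coincide.
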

\begin{proof}
See Appendix~\ref{proof:theorem:general_bound}.
\end{proof}

\par The previous main result combined with the lower bound on Gaussian zero-delay $\rdf$, leads to the following corollary.
\begin{corollary}(Bounds on zero-delay $\rdf$)\label{corollary}{\ \\}
Consider the realization of the zero-delay source-coding scheme illustrated in Fig.~\ref{fig:noisy_communication_system} with the change of $\awgn$ channel with $p-$parallel independently operating $\sdusq$ as illustrated in Fig. \ref{fig:replacement}. Then, for vector (stable or unstable) Gaussian $\ar (1)$ sources the following bounds hold
\begin{align}
{R}^{\na}(D)\leq{R}^{\op}_{\zd}(D)\leq{R}^{\na}(D)+\frac{p}{2}\log\left(\frac{\pi{e}}{6}\right)+1.\label{general_bounds:eq.1}
\end{align}
\end{corollary}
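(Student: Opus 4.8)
The plan is to assemble \eqref{general_bounds:eq.1} from two pieces. The right-hand inequality $R^{\op}_{\zd}(D)\le R^{\na}(D)+\frac{p}{2}\log(\frac{\pi e}{6})+1$ is nothing but \eqref{operational_upper_bound:1}, i.e.\ Theorem~\ref{theorem:general_bound} applied to the vector Gaussian $\ar(1)$ source \eqref{state_space_model}, so no new argument is needed there. The left-hand inequality $R^{\na}(D)\le R^{\op}_{\zd}(D)$ is the converse statement that the $\nrdf$ never exceeds the operational cost of any admissible zero-delay code; this is essentially \cite[eq.~(11)]{derpich-ostergaard2012}, and I would reproduce it in the notation of Section~\ref{sec:problem_formulation}--\ref{sec:preliminaries}.

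For the lower bound, fix admissible encoding/decoding policies $\{{\bf P}(dz_t|z^{t-1},x^t)\}$, $\{{\bf P}(dy_t|y^{t-1},z^t)\}$ whose induced process $\{({\bf x}_t,{\bf z}_t,{\bf y}_t)\}$ satisfies the distortion constraint of \eqref{def:operational_zero_delay}. First, since the channel codes ${\bf z}_t$ are instantaneous, $\mathbb{E}({\bf l}_t\mid{\bf z}^{t-1})\ge H({\bf z}_t\mid{\bf z}^{t-1})$ for each $t$; summing and using the chain rule gives $\sum_{t=0}^n\mathbb{E}({\bf l}_t)\ge\sum_{t=0}^n H({\bf z}_t\mid{\bf z}^{t-1})=H({\bf z}^n)\ge I({\bf x}^n;{\bf z}^n)$. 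Next, ${\bf y}^t$ is, by induction on $t$, a randomized function of ${\bf z}^t$, so the data-processing inequality yields $I({\bf x}^n;{\bf z}^n)\ge I({\bf x}^n;{\bf y}^n)$. Finally, the causal form of the encoder and decoder forces ${\bf y}_t$ to be conditionally independent of ${\bf x}_{t+1}^n$ given $({\bf y}^{t-1},{\bf x}^t)$, so the induced reconstruction kernel factorizes as $\prod_{t=0}^n {\bf P}(dy_t|y^{t-1},x^t)$, which is exactly the structure optimized over in Definition~\ref{nonanticipative_rdf}. Hence the induced joint law is feasible for \eqref{finite_time_nrdf} at the level $D_n\triangleq\frac{1}{n+1}\mathbb{E}\{d({\bf x}^n,{\bf y}^n)\}$, giving $\frac{1}{n+1}I({\bf x}^n;{\bf y}^n)\ge R^{\na}_{0,n}(D_n)$, and chaining the above, $\frac{1}{n+1}\sum_{t=0}^n\mathbb{E}({\bf l}_t)\ge R^{\na}_{0,n}(D_n)$.

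It remains to take $n\to\infty$. Using $\limsup_n D_n\le D$ together with the nonincreasing, continuous dependence of $R^{\na}_{0,n}(\cdot)$ on the distortion level (so that for any $\epsilon>0$ and $n$ large, $R^{\na}_{0,n}(D_n)\ge R^{\na}_{0,n}(D+\epsilon)$), and invoking that the limit in \eqref{infinite_time_nrdf} exists under the asymptotic stationarity conditions (1)--(2) of Section~\ref{subsec:realization_scheme} (guaranteed by \cite[Theorem 6.6]{stavrou2016} and \cite[Theorem 2]{stavrou-charalambous-charalambous2016}), one obtains $\limsup_n\frac{1}{n+1}\sum_{t=0}^n\mathbb{E}({\bf l}_t)\ge R^{\na}(D+\epsilon)$; letting $\epsilon\downarrow 0$ and taking the infimum over admissible policies gives $R^{\op}_{\zd}(D)\ge R^{\na}(D)$. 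Combined with \eqref{operational_upper_bound:1}, this proves \eqref{general_bounds:eq.1}. The only genuinely delicate step is this last limiting argument — reconciling the $\limsup$ in the operational definition with the $\lim$ in \eqref{infinite_time_nrdf} and transferring the per-block distortion level $D_n$ to $D$ — whereas the information-theoretic chain $\sum\mathbb{E}({\bf l}_t)\ge H({\bf z}^n)\ge I({\bf x}^n;{\bf z}^n)\ge I({\bf x}^n;{\bf y}^n)$ and the verification of the causal feasibility structure are standard.
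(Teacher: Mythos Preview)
Your proposal is correct and matches the paper's approach: the upper bound is precisely Theorem~\ref{theorem:general_bound}, and the lower bound is the standard fact $R^{\na}(D)\le R^{\op}_{\zd}(D)$, which the paper's one-line proof simply invokes together with \eqref{optimal_solution:eq.1} and Theorem~\ref{theorem:general_bound}. You have additionally spelled out the converse argument behind that fact (the chain $\sum_t\mathbb{E}({\bf l}_t)\ge H({\bf z}^n)\ge I({\bf x}^n;{\bf z}^n)\ge I({\bf x}^n;{\bf y}^n)$ and the limiting passage), which is exactly what the paper imports from \cite[eq.~(11)]{derpich-ostergaard2012} without reproducing it.
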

\begin{proof}
This is obtained using the fact that ${R}^{\na}(D)\leq{R}^{\op}_{\zd}(D)$, \eqref{optimal_solution:eq.1} and Theorem \ref{theorem:general_bound}.
\end{proof}

\begin{theorem}(Generalization)\label{theorem:any_order}{\ \\}
The bounds derived in Corollary \ref{corollary} based on the realization scheme of Fig.~\ref{fig:noisy_communication_system} hold for vector Gaussian sources of any order.
\end{theorem}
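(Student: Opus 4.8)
The plan is to reduce the order-$L$ case to the $\ar(1)$ case of Corollary~\ref{corollary} by a standard state augmentation, while keeping track of the fact that the reduction does not increase the number of scalar quantizers used per time step. Given a vector Gaussian $\ar$ source of order $L$, ${\bf x}_{t+1}=\sum_{j=1}^{L}A_j{\bf x}_{t+1-j}+B{\bf w}_t$, introduce the lifted state ${\bf X}_t\triangleq({\bf x}_t^{\T},{\bf x}_{t-1}^{\T},\ldots,{\bf x}_{t-L+1}^{\T})^{\T}\in\mathbb{R}^{pL}$; then ${\bf X}_{t+1}=\bar{A}{\bf X}_t+\bar{B}{\bf w}_t$, where $\bar{A}$ is the block companion matrix built from $A_1,\ldots,A_L$ and $\bar{B}=(B^{\T},0,\ldots,0)^{\T}$, i.e., a model of the form \eqref{state_space_model} with state dimension $pL$. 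Let the decoder keep the buffer ${\bf Y}_t\triangleq({\bf y}_t^{\T},{\bf y}_{t-1}^{\T},\ldots,{\bf y}_{t-L+1}^{\T})^{\T}$ of its past reconstructions. Using the $\ar(1)$ identity $\widehat{\bf x}_{t|t}={\bf y}_t$ of \eqref{new_estimates}, one gets $\widehat{\bf X}_{t|t-1}=\bar{A}\,{\bf Y}_{t-1}$, whose first $p$-block is the $L$-th order linear predictor $\sum_{j=1}^{L}A_j{\bf y}_{t-j}$ and whose remaining blocks are already-decoded samples; hence the lifted innovation ${\bf Y}_t-\widehat{\bf X}_{t|t-1}$ is supported on its first $p$-block and equals ${\bf y}_t-\sum_{j=1}^{L}A_j{\bf y}_{t-j}$ there. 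This is the crucial observation: although the lifted state has dimension $pL$, the only signal that must be transmitted at each step is the $p$-dimensional innovation ${\bf k}_t\triangleq{\bf x}_t-\widehat{\bf x}_{t|t-1}$, exactly as in the $\ar(1)$ scheme of Fig.~\ref{fig:noisy_communication_system}, the other $p(L-1)$ coordinates of ${\bf X}_t$ being recovered by the decoder from its buffer at no cost.

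Next I would check that the realization machinery of Section~\ref{subsec:realization_scheme} transfers to the lifted system. The Kalman recursions \eqref{kalman_filter_finite_time} applied to $({\bf X}_t,\bar{A},\bar{B})$ converge to a unique steady state whenever $\bar{A}$ is stable or the pair $(\bar{A},\bar{B})$ is stabilizable --- the direct analogues of conditions (1)--(2) in Section~\ref{subsec:realization_scheme} --- so that \eqref{optimal_realization} becomes time invariant and the orthogonal matrix $E$ need only diagonalize the $p\times p$ prediction-error block. Moreover, since for the induced jointly Gaussian test channel the innovations ${\bf k}_t$ and $\tilde{\bf k}_t$ are orthogonal to, hence independent of, ${\bf y}^{t-1}$, the manipulation of conditional entropies leading to \eqref{feedback_squeme:eq.6} carries over and $R^{\na}(D)$ for the order-$L$ source again equals $\lim\frac{1}{n+1}\sum_{t=0}^{n}I({\bf k}_t;\tilde{\bf k}_t)$, with the optimal value given by the log-determinant problem \eqref{optimal_solution:eq.1} with $(A,B)$ replaced by $(\bar{A},\bar{B})$ and the distortion constraint acting on ${\bf x}_t$. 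Thus no rate is lost in passing to the lifted description and no distortion bookkeeping beyond the original $\mse$ on ${\bf x}_t$ is required.

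With these pieces in place, I would replace the $p$-parallel $\awgn$ channel acting on the scaled innovation ${\bm\alpha}_t={\cal A}{\bf k}_t$ by $p$ independently operating $\sdusq$ with joint, dither-conditioned, time-memoryless entropy coding exactly as in Section~\ref{sec:main_results}, and then repeat verbatim the $\ecdq$ overhead estimate behind Theorem~\ref{theorem:general_bound} together with \eqref{intantaneous_code_bounds:eq.1}. Because only $p$ scalar quantizers are used at each time step, the resulting overhead term is again $\frac{p}{2}\log_2(\frac{\pi e}{6})+1$ (rather than $\frac{pL}{2}\log_2(\frac{\pi e}{6})+1$), and combining with the universal lower bound $R^{\na}(D)\le R^{\op}_{\zd}(D)$ reproduces exactly the two-sided bound of Corollary~\ref{corollary} for an $\ar$ source of arbitrary finite order. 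The step I expect to be the main obstacle is the rigorous justification that the lifted realization is lossless and that the effective quantization dimension truly stays at $p$ --- i.e., that the companion-form steady-state error covariance has only $p$ ``quantization-relevant'' coordinates while the remaining ones are deterministic functions of the decoder buffer --- together with the routine but necessary verification that stabilizability of the order-$L$ model carries over to $(\bar{A},\bar{B})$ and a careful treatment of the initial $L$-step transient.
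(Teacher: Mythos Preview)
Your proposal shares the paper's core device --- reducing the $\ar(L)$ source to an $\ar(1)$ source via the block-companion state augmentation --- but then goes considerably further than the paper does. The paper's proof stops immediately after writing the augmented model: it records that the $sp$-dimensional lifted process is $\ar(1)$, states that the feedback design of Fig.~\ref{fig:noisy_communication_system} and Theorem~\ref{theorem:general_bound} apply to it, and ends. Read literally, that argument delivers Corollary~\ref{corollary} with $p$ replaced by the augmented dimension $sp$ (and with $R^{\na}(D)$ and the $\mse$ constraint taken for the augmented state).

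You, by contrast, exploit the companion structure to argue that the lifted output innovation ${\bf Y}_t-\widehat{\bf X}_{t|t-1}$ is supported only on its first $p$-block, so that only $p$ (not $pL$) $\sdusq$'s are required per time step and the overhead remains $\frac{p}{2}\log_2(\tfrac{\pi e}{6})+1$ rather than $\frac{pL}{2}\log_2(\tfrac{\pi e}{6})+1$. That is a sharper conclusion than the paper's proof actually establishes, and the obstacle you single out --- showing the steady-state lifted error covariance has only $p$ ``quantization-relevant'' coordinates, together with matching the augmented $\nrdf$ to the original one under the correct distortion constraint --- is exactly where the additional work resides. If Theorem~\ref{theorem:any_order} is read merely as ``the form of the bound extends by augmentation'' (with the augmented dimension in the overhead), the paper's two-line argument suffices and your refinements are superfluous; if one wants the overhead to stay at the original $p$, your more careful route is what is needed, and the paper's proof does not supply it.
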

\begin{proof}
See Appendix \ref{proof:theorem:any_order}.
\end{proof}
%\par In the next theorem we explain how the bounds derived in \eqref{general_bounds:eq.1} generalize to Gauss AR sources of any order.

\par In the next remark, we draw connections to existing results in the literature.

\begin{remark}(Relations to existing results)\label{remark:discussion}{\ \\}
(1) For stationary stable scalar-valued Gaussian $\ar$ sources, our upper bound in Theorem \ref{theorem:general_bound} coincides with the bound obtained in \cite[Theorem 7]{derpich-ostergaard2012}. However, the upper bound in \cite{derpich-ostergaard2012} is obtained using a realization scheme with four filters instead of only one that we use in our scheme. In addition, our result takes into account unstable Gaussian sources too.\\
(2) Compare to \cite{silva-derpich-ostergaard2011}, we use $\ecdq$ based on a different realization setup that results into obtaining different lower and upper bounds. 
\end{remark} 

Next, we employ Theorem \ref{theorem:general_bound} to demonstrate a simulation example.

\begin{example}
 We consider a two-dimensional unstable Gaussian $\ar (1)$ source as follows:
\begin{align}
{\bf x}_{t+1} 
=\underbrace{\begin{bmatrix}
    -1.3 & 0.4 \\
     -0.3 & 0 
 \end{bmatrix}}_{A}
 {\bf x}_{t} 
+\underbrace{\begin{bmatrix}
    1 & 0 \\
    0 & 1 
 \end{bmatrix}}_{B} 
{\bf w}_{t}, 
  \label{nonstationary:two:dimension:equation1}
\end{align}
where ${\bf x}_{t}\in\mathbb{R}^2$, the parameter matrix $A$ is unstable because one of its eigenvalues, denoted by $\lambda_i(A)$, has magnitude greater than one, the pair ($A,B$) is stabilizable and ${\bf w}_{t}\sim{N}(0;I_{2\times{2}})$. By invoking SDPT3 \cite{tutuncu:2003} we plot the theoretical attainable lower and upper bounds to the zero-delay $\rdf$. This is illustrated in Fig. \ref{fig:unstable_rdf}. As expected from theory, $R^{\na}(D)\geq\sum_{\lambda_i(A)>1}\log|\lambda_i(A)|\approx{0.263}$~bits/source sample.
\begin{figure}[htp]
\centering
\includegraphics[width=\columnwidth]{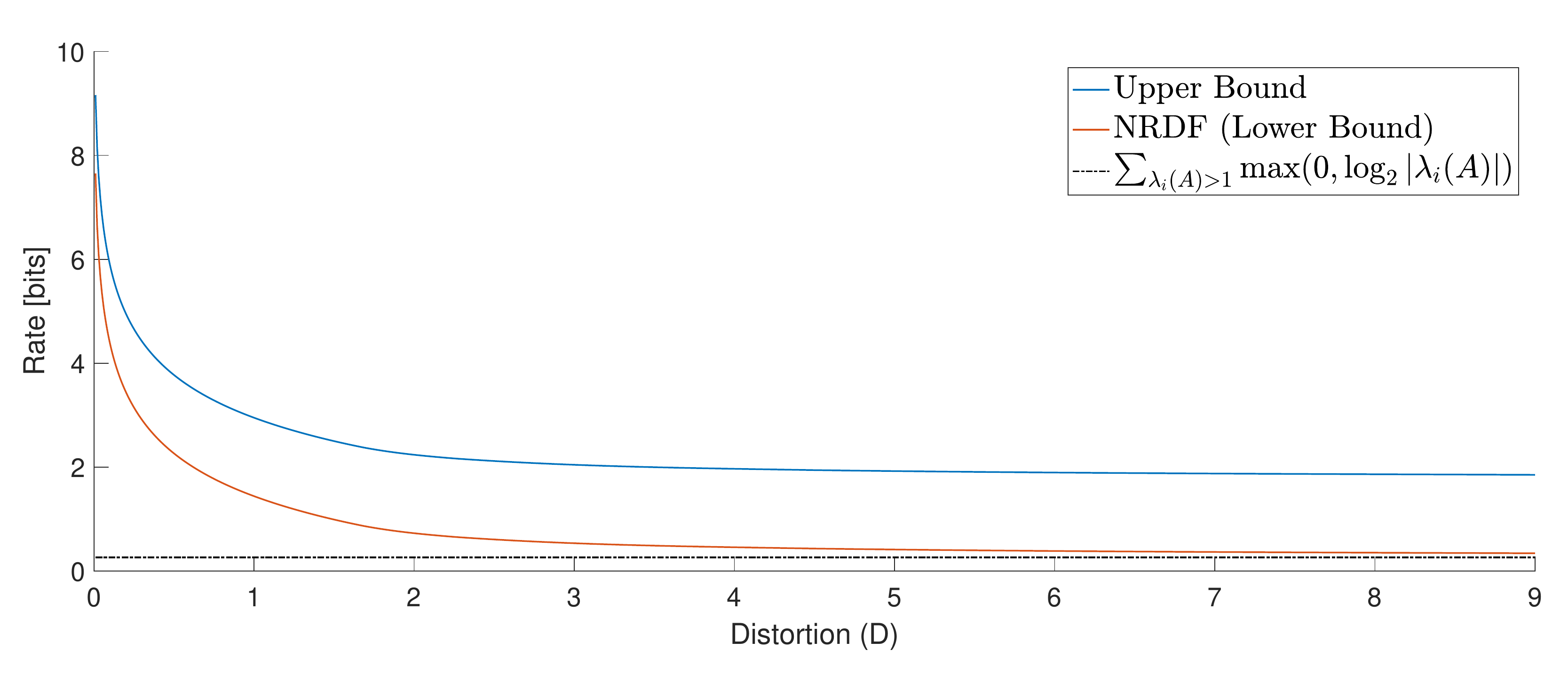}
\caption{Bounds on $R^{\op}_{\zd}(D)$ via the scheme of Fig. \ref{fig:noisy_communication_system}.}
\label{fig:unstable_rdf}
\end{figure}
\end{example}

\subsection{Vector Quantization}\label{subsection:vector_quantization}

\par It is interesting to observe that if instead of uniform scalar quantization we quantize over a lattice (vector) quantizer followed by memoryless entropy coded conditioned to the dither, then the upper bound in \eqref{operational_upper_bound:1} becomes
\begin{align}
R^{\op}_{\zd}(D)\leq{R}^{\na}(D)+\frac{p}{2}\log_2\left({2\pi{e}G_p}\right)+1\label{operational_upper_bound_vector_quantization:1}
\end{align}
where $G_p$ is the normalized second moment of the lattice \cite{zamir-feder1996}. If we take the average rate per dimension and assume an infinite dimensional vector Gaussian source, then by \cite[Lemma 1]{zamir-feder1996}, $G_p\rightarrow\frac{1}{2\pi{e}}$, and the terms due to space-filling loss and the loss due to entropy coding in \eqref{operational_upper_bound_vector_quantization:1} asymptotically goes to zero. Utilizing the latter, and the fact that $R^{\na}(D)\leq{R}_{\zd}^{\op}(D)$, we obtain
\begin{align}
\lim_{p\rightarrow\infty}\frac{1}{p}{R}^{\na}(D)\leq\lim_{p\rightarrow\infty}\frac{1}{p}{R}^{\op}_{\zd}(D)\leq\lim_{p\rightarrow\infty}\frac{1}{p}{R}^{\na}(D).\label{tight_bound}
\end{align}
i.e., ${R}^{\na}(D)$ is the $\opta$ by zero-delay codes.

%\subsection{Example}\label{sec:example}

%%%
%====================================================
%
%
%Conclusions
%
%
%======================================================

\section{Conclusions and Future Directions}\label{sec:conclusions}
We considered zero-delay source coding of a vector Gaussian $\ar$ source under $\mse$ distortion. Based on a feedback realization scheme that quantizes the innovations of a Kalman filter with a $\sdusq$, we derived an upper bound to the zero-delay $\rdf$. We discussed the performance of this scheme when using lattice quantization. For infinite dimensions we observed that the $\nrdf$ coincide with the zero-delay $\rdf$. An illustrative example is presented to support our findings. 
\par As an ongoing research, we will apply the proposed coding scheme based on $\sdusq$ to find the actual operational rates corresponding to the zero-delay $\rdf$. Moreover, we will examine similar coding schemes for fixed-length coding rate.

%%=======================================================
%
%
%
%Appendix
%
%
%
%========================================================

\appendices

\section{Proof of Theorem~\ref{theorem:general_bound}}\label{proof:theorem:general_bound}

\par In the realization scheme proposed in Fig.~\ref{fig:noisy_communication_system}, with the change of $\awgn$ channel with $p-$parallel independently operating $\sdusq$, the operational rate for each $t$ is equal to the conditional entropy $H(\tilde{\bm \beta}_t|{\bf r}_t)$ where $\tilde{\bm \beta}_{t}=\{\tilde{\bm \beta}_{t,1},\ldots,\tilde{\bm \beta}_{t,p}\}$, $\tilde{\bm \beta}_{t,i}=Q_{\Delta_i}({\bm \alpha}_{t,i}+{\bf r}_{t,i}),~i=1,\ldots,p$, i.e., the entropy of the quantized output $\tilde{\bm \beta}_t$ conditioned on the $t-$value of the dither signal ${\bm r}_t$. This leads to the following analysis.
\begin{align}
&H(\tilde{\bm \beta}_t|{\bf r}_t)\stackrel{(a)}=I({\bm \alpha}_{t};{\bm \beta}_{t})\nonumber\\
&\stackrel{(b)}=I({\bm \alpha}_{t};{\bm \alpha}_{t}+{\bm \xi}_{t})\nonumber\\
&={h}({\bm \alpha}_{t}+{\bm \xi}_{t})-h({\bm \xi}_{t})\nonumber\\
&\stackrel{(c)}={h}({\bm \alpha}^G_{t}+{\bf v}_{t})-h({\bm \alpha}^G_{t})+\mathbb{D}({\bm \xi}_{t}||{\bf v}_t)-\mathbb{D}({\bm \alpha}_{t}+{\bm \xi}_{t}||{\bm \alpha}^G_{t}+{\bf v}_{t})\nonumber\\
&\stackrel{(d)}\leq{I}({\bm \alpha}^G_{t};{\bm \alpha}^G_{t}+{\bf v}_{t})+\mathbb{D}({\bm \xi}_{t}||{\bf v}_t)\nonumber\\
&\stackrel{(e)}=I({\bm \alpha}^G_{t};{\bm \alpha}^G_{t}+{\bf v}_{t})+\frac{p}{2}\log_2\left(\frac{\pi{e}}{6}\right)\nonumber\\
&=I({\bm \alpha}^G_{t};{\bm \beta}^G_{t})+\frac{p}{2}\log_2\left(\frac{\pi{e}}{6}\right),\label{proof:eq.1}
\end{align}
where $(a)$ follows from \cite[Theorem 1]{zamir-feder1992}; $(b)$ follows from the fact that the quantization noise is ${\bm \xi}_t={\bm \beta}_t-{\bm \alpha}_t$ (see Fig.~\ref{fig:equivalent_additive_uniform_noise_channel}); $(c)$ follows from the fact that the relative entropy $\mathbb{D}(x||x^\prime)=h(x^\prime)-h(x)$, see, e.g., \cite[Theorem 8.6.5]{cover-thomas2006}; $(d)$ follows from the fact that $D({\bm \alpha}_{t}+{\bm \xi}_{t}||{\bm \alpha}^G_{t}+{\bf v}_{t})\geq{0}$, with equality if and only if $\{{\bm \xi}_t:~t=0,1,\ldots\}$ becomes a Gaussian distribution; $(e)$ from the fact that the differential entropy $h({\bf v}_t)$ of a Gaussian random vector with covariance $\Sigma_{\bf v}\triangleq\diag\{V\}$ is 
\begin{align*}
h({\bf v}_t)=\frac{1}{2}\log_2(2\pi{e})^p|\Sigma_{\bf v}|=\sum_{i=1}^p\frac{1}{2}\log_2(2\pi{e})V_{ii},
\end{align*}
and the entropy $h({\bm \xi}_t)$ of the uniformly distributed random vector ${\bm \xi}_t=\{{\bm \xi}_{t,i}:~i=1,2,\ldots,p\}, {\bm \xi}_{t,i}\sim\unif\left(-\frac{\Delta_i}{2},\frac{\Delta_i}{2}\right)$ is
\begin{align*}
h({\bm \xi}_t)=\sum_{i=1}^p\frac{1}{2}\log_2\Delta^2_i.
\end{align*}
Since we have that $V_{ii}=\frac{\Delta^2_i}{12}$, $i=1,\ldots,p$, the result follows.
\par Next, note that for $n=0,1,\ldots$, the following inequality holds in Fig.~\ref{fig:noisy_communication_system}.
\begin{align}
{I}({\bf x}^n;{\bf y}^n)&\stackrel{(a)}=\sum_{t=0}^n{I}({\bf x}_t;{\bf y}_t|{\bf y}^{t-1})\nonumber\\
&\stackrel{(b)}=\sum_{t=0}^n{I}({\bf k}_t;\tilde{\bf k}_t)\nonumber\\
%&\stackrel{(c)}=\sum_{t=0}^n{I}({\bf k}_t;E^{\T}\Theta(\Phi{E}{\bf k}_t+{\bf v}_t)|\tilde{\bf k}^{t-1})\nonumber\\
%&\stackrel{(d)}=\sum_{t=0}^n{I}({\bf k}_t;E^{\T}\Theta(\Phi{E}{\bf k}_t+{\bf v}_t))\nonumber\\
&\stackrel{(c)}=\sum_{t=0}^n{I}({\bm \alpha}_t;{\bm \beta}_t),\label{proof:eq.2}
\end{align}
where $(a)$ follows from the structural properties of specific extremum problem resulting in the realization of Fig. \ref{fig:noisy_communication_system} (see, e.g., the analysis in Section~\ref{sec:problem_formulation} and \cite[Remark IV.5]{charalambous-stavrou-ahmed2014}); $(b)$ follows from the analysis in \cite[Equation (35)]{charalambous-stavrou-ahmed2014}; $(c)$ follows from the fact that $E, \Phi, \Theta$ are invertible matrices and as a result the information from ${\bf k}_t$ to $\tilde{\bf k}_t$ is the same as from ${\bm \alpha}_t$ to ${\bm \beta}_t$ (information lossless operation).  %follows from the fact that ${\tilde{\bf k}_t}=E^{\T}\Theta(\Phi{E}{\bf k}_t+{\bf v}_t)$, where ${\bf k}_t={\bf x}_t-\widehat{\bf x}_{t|t-1}$; $(d)$ follows from the orthogonality of ${\bf k}_t$ and ${\bf v}_t$ on ${\bf y}^{t-1}$ and $\tilde{\bf k}^{t-1}$; $(e)$ follows from data processing inequality, i.e., $({\bm \alpha}_t+{\bf v}_t)\longleftrightarrow{\bf k}_t\longleftrightarrow{\bm \alpha}_t$ forms a Markov chain.  

\par Since we are assuming joint memoryless entropy coding of $p$ independently operating scalar uniform quantizers with subtractive dither, then by \eqref{intantaneous_code_bounds}, for $t=0,1,\ldots,n$, we obtain
\begin{align}
\sum_{t=0}^n\mathbb{E}({\bf l}_t)&\leq\sum_{t=0}^n\left({H}(\tilde{\bm \beta}_t|{\bf r}_t)+1\right)\nonumber\\
&\stackrel{(a)}\leq\sum_{t=0}^n\left(I({\bm \alpha}^G_{t};{\bm \beta}^G_{t})+\frac{p}{2}\log_2\left(\frac{\pi{e}}{6}\right)+1\right)\nonumber\\
&\stackrel{(b)}\leq{I}({\bf x}^{n,G};{\bf y}^{n,G})+(n+1)\frac{p}{2}\log_2\left(\frac{\pi{e}}{6}\right)+(n+1),\label{proof:eq.3}
\end{align}
where $(a)$ follows by \eqref{proof:eq.1} and $(b)$ follows from \eqref{proof:eq.2}.

\par Then, by first taking the per unit time limiting expression in \eqref{proof:eq.3} and then the infimum, we obtain 
\begin{align}
&\inf\limsup_{n\longrightarrow\infty}\frac{1}{n+1}\sum_{t=0}^n\mathbb{E}({\bf l}_t)\nonumber\\
&\leq\inf\limsup_{n\longrightarrow\infty}\frac{1}{n+1}{I}({\bf x}^{n,G};{\bf y}^{n,G})+\frac{p}{2}\log_2\left(\frac{\pi{e}}{6}\right)+1\nonumber\\
&\stackrel{(a)}\Longrightarrow{R}_{\zd}^{\op}\leq\bar{R}^{\na}(D)+\frac{p}{2}\log_2\left(\frac{\pi{e}}{6}\right)+1,\label{proof:eq.4}
\end{align}
where $(a)$ follows by \eqref{def:operational_zero_delay} and \eqref{infinite_time_nrdf_upper_bound} respectively, and $\bar{R}^{\na}(D)$ is the upper bound expression of $R^{\na}(D)$ for the vector Gaussian $\ar (1)$ source model given by \eqref{state_space_model}.
\par Finally, by assumptions, (i.e., $A$ is stable or $(A,B)$ stabilizable) the innovations' Gaussian source is asymptotically stationary. Since $R^{\na}(D)=R^{\na,{{\bf k}^n},\tilde{\bf k}^n}(D)$, then at steady state, we have $\bar{R}^{\na}(D)=R^{\na}(D)$ and the result follows.\qed

\section{Proof of Theorem~\ref{theorem:any_order}}\label{proof:theorem:any_order}

Assume the following vector Gaussian $\ar(s)$ process, where $s$ is a positive integer, in state space representation.
\begin{align}
{\bf x}_{t+1}=\sum_{j=1}^sA_j{\bf x}_{t-j+1}+B{\bf w}_t,\label{vector_ar-p}
\end{align}
where $A_j\in\mathbb{R}^{p\times{p}}$, and $B\in\mathbb{R}^{p\times{q}}$ are deterministic matrices, ${\bf x}_0\in\mathbb{R}^p\sim{N}(0;\Sigma_{{\bf x}_0})$ is the initial state, and ${\bf w}_t\in\mathbb{R}^q\sim{N}(0;I_{q\times{q}})$ is an $\IID$ Gaussian sequence, independent of ${\bf x}_0$. Clearly, for $s=1$, \eqref{vector_ar-p} gives as a special case the source model described by \eqref{state_space_model}.\\
Next, we show that \eqref{vector_ar-p} can be expressed as an augmented vector Gaussian $\ar (1)$ process as follows
\begin{align}
\tilde{\bf x}_{t+1}=\tilde{A}\tilde{\bf x}_{t}+\tilde{B}\tilde{\bf w}_t,\label{vector_ar1}
\end{align}
where $\tilde{A}\in\mathbb{R}^{sp\times{sp}}$, and $\tilde{B}\in\mathbb{R}^{sp\times{sq}}$ are deterministic matrices, $\tilde{\bf x}_0\in\mathbb{R}^{sp}\sim{N}(0;\Sigma_{\tilde{\bf x}_0})$ is the initial state with $\Sigma_{\tilde{\bf x}_0}$ being the covariance of the initial state, and $\tilde{\bf w}_t\in\mathbb{R}^{sq}\sim{N}(0;\Sigma_{\tilde{\bf w}_t})$ is an $\IID$ Gaussian sequence, independent of $\tilde{\bf x}_0$.

The proof employs a simple augmentation of the state process. First, note that the state space model of \eqref{vector_ar-p} can be modified as follows.
\begin{align}
\begin{bmatrix}
     {\bf x}_{t+1} \\ 
     {\bf x}_{t}\\
     \vdots\\
     {\bf x}_{t-s+2}
\end{bmatrix}&=
\begin{bmatrix}
      A_1 & {A}_2 &\ldots & A_{s-1}&A_s\\ 
      I & 0 & \ldots &0 & 0\\
      0& I & \ddots &0 & 0\\
      \vdots & \vdots&\ddots & \vdots & \vdots\\
      0 & 0 & \ldots &I & 0\\
\end{bmatrix}
\begin{bmatrix}
     {\bf x}_{t} \\ 
     {\bf x}_{t-1}\\
     \vdots\\
     {\bf x}_{t-s+1}
\end{bmatrix}\nonumber\\
&+\begin{bmatrix}
      B & 0 &\ldots & 0\\ 
      0 & 0 & \ldots & 0\\
     \vdots & \vdots & \ldots & \vdots\\
      0 & 0 & \ldots & 0\\
\end{bmatrix}
\begin{bmatrix}
     {\bf w}_{t} \\ 
     0\\
     \vdots\\
     0
\end{bmatrix}.\label{proof:aug.eq.3}
\end{align}
Then, \eqref{proof:aug.eq.3} can be written in an augmented state space form as follows.
\begin{align}
\tilde{\bf x}_{t+1}=\tilde{A}\tilde{\bf x}_t+{B}\tilde{\bf w}_t, \label{proof:aug.eq.4}
\end{align}
where
\begin{align*}
\tilde{\bf x}_{t+1}&=\begin{bmatrix}
     {\bf x}_{t+1} \\ 
     {\bf x}_{t}\\
     \vdots\\
     {\bf x}_{t-s+2}
\end{bmatrix}\in\mathbb{R}^{sp},
\tilde{\bf x}_t=\begin{bmatrix}
     {\bf x}_{t} \\ 
     {\bf x}_{t-1}\\
     \vdots\\
     {\bf x}_{t-s+1}
\end{bmatrix}\in\mathbb{R}^{sp},\\
\tilde{A}&=\begin{bmatrix}
      A_1 & {A}_2 &\ldots & A_{s-1}&A_s\\ 
      I & 0 & \ldots &0 & 0\\
      0& I & \ddots &0 & 0\\
      \vdots & \vdots&\ddots & \vdots & \vdots\\
      0 & 0 & \ldots &I & 0\\
\end{bmatrix}\in\mathbb{R}^{sp\times{sp}},\\
~\tilde{B}&=\begin{bmatrix}
      B & 0 &\ldots & 0\\ 
      0 & 0 & \ldots & 0\\
     \vdots & \vdots & \ldots & \vdots\\
      0 & 0 & \ldots & 0\\
\end{bmatrix}\in\mathbb{R}^{sp\times{sq}},
~\tilde{\bf w}_t=\begin{bmatrix}
     {\bf w}_{t} \\ 
     0\\
     \vdots\\
     0
\end{bmatrix}\in\mathbb{R}^{sq}.
\end{align*}
The augmented state space representation of \eqref{proof:aug.eq.4} is an augmented vector Gaussian $\ar (1)$ process which can then be applied to the feedback design of Fig.~\ref{fig:noisy_communication_system} obtaining the same bound as in Theorem~\ref{theorem:general_bound}. This completes the proof.\qed

% =================================================================================
% Bibliography
% =================================================================================
\bibliographystyle{IEEEtran}
\bibliography{string,photis_quantization}

\end{document}